\newtheorem{lemma}{Lemma}
\newtheorem{theorem}{Theorem}
\renewcommand{\vec}[1]{\mathbf{#1}}
\begin{document}
\title{ Tight detection efficiency bounds of Bell tests in no-signaling theories}
\author{Zhu Cao}
\email{cao-z13@mails.tsinghua.edu.cn}
\author{Tianyi Peng}
\email{perfectty@qq.com}
\address{Center for Quantum Information, Institute for Interdisciplinary Information Sciences, Tsinghua University, Beijing, China}

\begin{abstract}
No-signaling theories, which can contain nonlocal correlations stronger than classical correlations but limited by the no-signaling condition, have deepened our understanding of the quantum theory. In principle, the nonlocality of these theories can be verified via Bell tests. In practice, however, inefficient detectors may make Bell tests unreliable, which is called the detection efficiency loophole. In this work, we show almost tight lower and upper bounds of the detector efficiency requirement for demonstrating the nonlocality of no-signaling theories, by designing a general class of Bell tests. In particular, we show tight bounds for two scenarios: the bipartite case and the multipartite case with a large number of parties. To some extent, these tight efficiency bounds quantify the nonlocality of no-signaling theories. Furthermore, our result shows that the detector efficiency can be arbitrarily low even for Bell tests with two parties, by increasing the number of measurement settings.
Our work also sheds light on the detector efficiency requirement for showing the nonlocality of the quantum theory.
\end{abstract}

\maketitle

\section{Introduction}
Quantum mechanics allows remote parties to be entangled in a way that is beyond classical physics. Bell tests, being an elegant illustration of this phenomenon, perform projective measurements on entangled parties who cannot signal to each other, for instance, enforced by space-like separation or a shield between the remote parties, to generate correlations of measurement outcomes that are impossible for the classical theory (i.e., local realism) \cite{bell}. By exploiting Bell tests, Popescu and Rohrlich generalize the theories beyond quantum mechanics, which are still restricted by the no-signaling condition \cite{prbox}. This immediately raises several interesting questions. One is to understand the restriction on quantum mechanics beyond the no-signaling theory, which inspires a line of works on quantum foundations \cite{Hardy2001quant}. Another is whether there exists a theory more nonlocal than the quantum theory in nature, such as the one only limited by the no-signaling condition (later referred to as \emph{the maximally nonlocal theory}). Posing constraints in addition to the no-signaling condition, such as the uncertainty principle, induces various no-signaling theories, that are less nonlocal than the maximally nonlocal theory. Though such theories have not been experimentally found, researchers postulate that they may exist in ultra-high-density objects such as black holes \cite{Preskill1992Do} or when the scale is out of the quantum regime, for example, smaller than a Planck length. Independently, this research has spurred interest for device-independent quantum information processing where the adversary is only limited to the no-signaling condition \cite{mayers1998quantum,QKDNoSignal05}.

One of the major obstacles of Bell tests
is the detection efficiency loophole. In fact, loophole-free Bell tests have only been very recently demonstrated \cite{Hensen2015Loophole, LoopholeFree:Zeilinger, LoopholeFree:NIST}. They are all based on Bell inequalities that have two measurement settings for two parties and thus need a detection efficiency bound of at least $2/3$ \cite{PhysRevA.68.062109}. It was known that $2/3$ is tight for two measurement settings \cite{Eberhard93}. However, beyond that, the efficiency bounds of no-signaling theories for more measurement settings and/or parties are largely unknown, even for the most-studied quantum theory. The first result in this direction is due to Larsson and Semitecolos \cite{larsson2001strict}, who show that, for $N$ parties and two measurement settings, the detector efficiency requirement is no larger than $N/(2N+1)$. This bound is later shown to be tight by Massar and Pironio \cite{PhysRevA.68.062109}. In the case of two parties, the best upper bound with four measurement settings is $61.8\%$ \cite{vertesi2010closing} and with a large number of measurement settings, the upper bound can approach zero \cite{massar2002nonlocality}. In the tripartite case, the best upper bound for eight measurement settings is $0.501$ \cite{pal2015closing}. For an infinite number of parties,
a series of works
\cite{PhysRevLett.91.047903,PhysRevA.86.062111,PhysRevA.92.052104}
have led to the best upper bound $2/(2+M)$ for $M$ measurement settings.

In practice, to lower the detection efficiency requirement, Bell tests with more than two measurement settings are more important than the ones with two measurement settings because they potentially have a lower efficiency requirement. Consequently they could make experimental realizations easier, especially for optical systems where the loss is normally high. The minimum detector efficiency to violate the local realism can also be viewed as a measure to characterize nonlocal correlations and thus is an important operational quantity.

\section{Preliminaries}
Before continuing, we first formalize the problem of Bell tests.
In Bell tests, there are two space-separated parties, $A$ and $B$. In each turn, $A$ ($B$) will be given a number $x \in [1,M_A]$ ($y \in [1,M_B]$)  randomly. Then $A$ and $B$ will output the outcomes $a$ and $b$ respectively according to the inputs and their shared resources.

Since the two parties cannot signal to each other in a Bell test, $A$ $(B)$ is unaware of the input to $B$ $(A)$. Thus a no-signaling probability $p^{NS}$ satisfies
\begin{equation}
\begin{aligned}
\forall x,y \quad p^{NS}(a|x,y)= p^{NS}(a|x), \\
p^{NS}(b|x,y) = p^{NS}(b|y),\\
\end{aligned}
\end{equation}
 which represents that the probability of one party $A$ $(B)$ outputting an outcome $a$ $(b)$ is independent of the input of the other party $B$ $(A)$ and only depends on its own input $x$ ($y$).

Denote the detector efficiency as $\eta$, which is independent of the input. The probability under such inefficiency of detectors, denoted as $p^{NS}_{\eta}$, is related to the ideal no-signaling probability $p^{NS}$ by
\begin{equation}
\begin{aligned}
p^{NS}_\eta(a,b|x,y) =& \eta^2p^{NS}(a,b|x,y),
\\p^{NS}_\eta(\Phi,\Phi) = &(1-\eta)^2,
\\
p^{NS}_\eta(a,\Phi|x) =& \eta(1-\eta)p^{NS}(a|x),\\
p^{NS}_\eta(\Phi,b|y) =& \eta(1-\eta)p^{NS}(b|y),
\end{aligned}
\end{equation}
where $\Phi$ denotes the empty output corresponding to a failed detection. The derivation of these relations is straightforward. For example, for the first relation, the outcomes $a$ and $b$ would be obtained only when both detectors respond. Hence the probability shrinks by a factor $\eta^2$ because the detector of each party responds with a probability $\eta$ and the detectors of different parties are independent.

A local hidden variable  (LHV) model assumes that $A$ and $B$ share a random variable $\lambda$ but cannot communicate. The strategy of $A$ ($B$) can then be characterized by the probability $p_A(a|x,\lambda)$ ($p_B(b|y, \lambda)$), which uses $\lambda$ and $x$ ($y$) to determine the probability of outputting $a$ ($b$). If $p^{NS}_{\eta}$ can be  simulated by a LHV model, there exists a local strategy such that for any outcomes $a$ and $b$ (including the empty output $\Phi$),
\begin{equation}
p^{NS}_{\eta} (a,b|x,y) = \int_{\lambda}p(\lambda)p(a|x,\lambda)p(b|y,\lambda)d\lambda.
\end{equation}
Since the efficiency $\eta$ and the simulation capability of a LHV model are monotonic (as shown in  the Appendix \ref{app}), one can define the minimum value of detector efficiency $\eta^*_{NS}$ for showing the nonlocality of the maximally nonlocal theory.

\section{Results}
We are now ready to state our main results.

\begin{theorem}
In Bell tests with two parties,
\begin{equation}
\eta_{NS}^{*}\ge \frac{M_A+M_B-2}{M_AM_B-1},
\end{equation}
 where $M_A,M_B$ are the numbers of inputs of $A$ and $B$ respectively.
\end{theorem}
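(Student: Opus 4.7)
The plan is to exhibit, for every no-signaling distribution $p^{NS}$, an explicit LHV simulation of the inefficient distribution $p^{NS}_\eta$ at the boundary efficiency $\eta_0 \equiv (M_A+M_B-2)/(M_A M_B-1)$. By the monotonicity of LHV simulability in $\eta$ recalled in the Appendix, simulability at $\eta_0$ automatically propagates to every $\eta \leq \eta_0$, so nonlocality of any $p^{NS}$ cannot be witnessed for such $\eta$, which is exactly the bound claimed.

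The LHV model I would use is a convex mixture of three families of local strategies. Family (0), of weight $p_0$: both parties output $\Phi$ on every input. Family (I), of total weight $p_1$ split uniformly over $x_0 \in [M_A]$: the shared randomness specifies $x_0$ together with an outcome $a_0$ sampled from $p^{NS}(\cdot \mid x_0)$; Alice outputs $a_0$ if her input equals $x_0$ and $\Phi$ otherwise, while Bob always fires and on input $y$ samples $b$ from the no-signaling conditional $p^{NS}(\cdot \mid a_0, x_0, y)$. Family (II), of weight $p_2$ split uniformly over $y_0 \in [M_B]$: the mirror construction with the roles of $A$ and $B$ swapped. Each family is a legitimate local strategy, because Bob's sampling rule in (I) uses only the shared data $(x_0,a_0)$ and his own input $y$, and dually for Alice in (II).

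The verification splits into two pieces. For the \emph{outcome statistics}, in family (I) with $x_0 = x$ both detectors fire and the induced joint distribution is $p^{NS}(a \mid x_0)\, p^{NS}(b \mid a, x_0, y) = p^{NS}(a,b \mid x,y)$; when $x_0 \neq x$ only Bob fires and his marginal reduces, via the no-signaling identity $p^{NS}(b \mid x_0, y) = p^{NS}(b \mid y)$, to $\sum_{a_0} p^{NS}(a_0 \mid x_0)\, p^{NS}(b \mid a_0, x_0, y) = p^{NS}(b \mid y)$ as required. Family (II) is symmetric, and family (0) contributes only $(\Phi,\Phi)$. For the \emph{detection statistics}, equating the combinatorial expressions for the four detection-pattern probabilities on any $(x,y)$ with $(1-\eta)^2,\ \eta(1-\eta),\ \eta(1-\eta),\ \eta^2$ forces
\[
p_0 = (1-\eta)^2,\qquad p_1 = \frac{M_A \eta(1-\eta)}{M_A-1},\qquad p_2 = \frac{M_B \eta(1-\eta)}{M_B-1},
\]
and the last remaining constraint collapses to $\eta(M_A M_B - 1) = M_A + M_B - 2$, i.e. $\eta = \eta_0$; at this value $p_0,p_1,p_2 \geq 0$ and sum to one, so the mixture is a valid LHV model.

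The main obstacle will be the outcome-statistics check: the sampling rule $p^{NS}(\cdot \mid a_0, x_0, y)$ must simultaneously yield the correct full joint when both detectors fire and the correct single-party marginal when only Bob fires, and both rely essentially on the bipartite no-signaling property. The detection-statistics step is routine algebra but is exactly what singles out $\eta_0$ as the threshold of the construction.
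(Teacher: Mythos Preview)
Your proposal is correct and follows essentially the same route as the paper: the paper's proof simply cites the LHV construction of Massar and Pironio (Theorem~1 of Ref.~\cite{PhysRevA.68.062109}) and notes that it relies only on the no-signaling condition, whereas you have written out precisely that construction and verified it in detail. Your use of the monotonicity lemma from the Appendix to pass from the boundary value $\eta_0$ to all $\eta\le\eta_0$ is also in line with the paper's framework.
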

\begin{proof}
Theorem 1 in Ref.~\cite{PhysRevA.68.062109} designs a LHV model to simulate any quantum strategy if the efficiency is lower than or equals $(M_A+M_B-2)/(M_AM_B-1)$. Since this design only leverages the no-signaling condition, the efficiency bound also holds in the maximally nonlocal theory which completes the proof.
\end{proof}

When $M_A=M_B=2$, Theorem 1 gives a well-known detector efficiency bound of $2/3$ which is tight for the quantum theory	\cite{larsson2001strict}. We now show that, for arbitrary  $M_A$ and $M_B$, the efficiency bound $(M_A+M_B-2)/(M_AM_B-1)$ is tight for the maximally nonlocal theory. The same statement is open for the quantum theory.

\begin{theorem}
In two-party Bell tests with $M_A$ and $M_B$ inputs respectively,
\begin{equation}
\label{eq:twoupper}
\eta_{NS}^{*}\le \frac{M_A+M_B-2}{M_AM_B-1}.
\end{equation}
\end{theorem}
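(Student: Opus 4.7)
The plan is to exhibit an explicit no-signaling distribution $p^{NS}$ together with a Bell inequality whose $\eta$-inefficient version of $p^{NS}$ violates it precisely when $\eta > (M_A+M_B-2)/(M_AM_B-1)$. The construction will generalize Eberhard's inequality from the $(M_A,M_B)=(2,2)$ case to arbitrary $(M_A,M_B)$.

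I would set up the Bell inequality in Collins--Gisin form
\[
I \;=\; \sum_{x,y} c_{xy}\, p(+,+|x,y)\;-\;\sum_x \alpha_x\, p_A(+|x)\;-\;\sum_y \beta_y\, p_B(+|y) \;\le\; 0,
\]
with coefficients $c_{xy},\alpha_x,\beta_y$ chosen so that (a) the local-realistic bound is exactly $0$, and (b) the functional has $M_A+M_B-2$ unit-weight marginal terms together with $M_AM_B-1$ ``positive'' joint cells and one distinguished cell (say $(M_A,M_B)$) carrying a compensating negative coefficient. For the companion distribution I would take a generalized Popescu--Rohrlich box with binary outputs, uniform marginals $p^{NS}_A(+|x)=p^{NS}_B(+|y)=1/2$, and $p^{NS}(+,+|x,y)=1/2$ on the $M_AM_B-1$ positive joint cells and $0$ on the distinguished cell. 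Such a box is manifestly no-signaling (uniform marginals) and any such binary correlation matrix is realizable.

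Under the standard convention that a non-detection is reassigned to the outcome $-$, one has $p^{NS}_\eta(+,+|x,y)=\eta^2 p^{NS}(+,+|x,y)$ and $p^{NS}_\eta(+|x)=\eta\, p^{NS}_A(+|x)$, so the functional evaluates on $p^{NS}_\eta$ to
\[
I(p^{NS}_\eta) \;=\; \eta^2\, I_{\text{joint}} \;-\; \eta\, I_{\text{marg}}.
\]
By the coefficient arrangement above, $I_{\text{joint}}=(M_AM_B-1)/2$ and $I_{\text{marg}}=(M_A+M_B-2)/2$, so the violation condition $I(p^{NS}_\eta)>0$ reduces exactly to $\eta>(M_A+M_B-2)/(M_AM_B-1)$, which is the claimed upper bound.

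The main obstacle is condition (a), i.e.\ proving the tight LHV bound $I \le 0$. The naive generalization of Eberhard's coefficients --- $c_{xy}=1$ on the positive cells, $c_{M_A,M_B}=-1$, and unit marginal weights --- is enough for $(M_A,M_B)=(2,2)$, but one checks that boundary LHV strategies leave $I>0$ for larger $(M_A,M_B)$, so the distinguished-cell coefficient and possibly the marginal weights must be tuned as explicit functions of $M_A,M_B$. I would establish the tight LHV bound either by a combinatorial case analysis over deterministic strategies parametrized by the numbers $u,v$ of ``$+$''-valued inputs on each side, or by an LP-duality argument that mirrors the structure of the Massar--Pironio LHV simulation underlying Theorem~1. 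Once the LHV bound is certified, the remaining efficiency calculation is immediate.
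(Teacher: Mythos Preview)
Your high-level plan---exhibit a no-signaling box together with a Bell functional whose $\eta$-dependent value crosses zero exactly at $(M_A+M_B-2)/(M_AM_B-1)$---is the same as the paper's. The specific construction you propose, however, cannot work, and the obstacle is not merely the unresolved LHV bound you flag.

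Your box has $p^{NS}(+,+|x,y)=\tfrac12$ for $(x,y)\neq(M_A,M_B)$ and $p^{NS}(+,+|M_A,M_B)=0$ with uniform marginals, i.e.\ perfect correlation on all input pairs except one, where it is perfect anti-correlation. But then inputs $x\in\{1,\dots,M_A-1\}$ are mutually indistinguishable for party $A$, and likewise for $B$; the box is literally the two-setting PR box with each setting duplicated. Consequently its detection threshold is exactly $2/3$ for every $M_A,M_B\ge 2$: any LHV model for the PR box at efficiency $\eta\le 2/3$ lifts by precomposing with $x\mapsto[x=M_A]$, and conversely any LHV model for your box restricts to one for the PR box. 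No choice of Bell coefficients can witness a threshold below $2/3$, because the threshold is a property of the box, not of the inequality. So ``tuning the distinguished-cell coefficient and the marginal weights'' cannot rescue the construction; you need a different box that genuinely exploits all $M_A M_B$ input pairs.

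The paper achieves this by abandoning binary outputs. It takes a prime $P\ge\max(M_A,M_B)$, uses $P$-valued outputs, and defines the box by $a+b\equiv xy\pmod P$ (uniform over the $P$ solutions). The accompanying Bell functional assigns coefficient $-P^4$ to any $(a,b,x,y)$ violating this congruence, $+1$ to those satisfying it (except the $x=y=1$ block, which gets $0$), and subtracts one marginal term $p(a|x)$ for each $x>1$ and $p(b|y)$ for each $y>1$. The LHV bound $\le 0$ then follows from a short case analysis whose heart is the algebraic rigidity of the congruence: if a deterministic strategy produced valid outputs on two distinct $x$-values and two distinct $y$-values, subtracting the four congruences would give $(x_1-x_2)(y_1-y_2)\equiv 0\pmod P$, impossible since $P$ is prime and all settings are in $\{1,\dots,P\}$. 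Thus any deterministic LHV strategy avoiding the $-P^4$ penalty is supported on at most one row or one column, and the marginal terms exactly cancel its positive contribution. Evaluating on the box at efficiency $\eta$ gives $\eta^2(M_AM_B-1)-\eta(M_A+M_B-2)$, yielding the claimed threshold. The essential idea you are missing is that many outputs, together with the $\mathbb{Z}_P$ structure, are what make every pair of settings interact nontrivially.
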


\begin{proof}
Denote $P$ as the minimal prime number such that $P\geq \max(M_A,M_B) $. We construct a Bell inequality with $P$ outputs excluding $\Phi$,
\begin{equation}
\begin{aligned}
&I_{M_AM_B(P+1)(P+1)} = \sum_{a,b,x,y} f(a,b,x,y)p(a,b|x,y) \\
&\quad\quad+ \sum_{a,x} g(a,x)p(a|x) + \sum_{b,y} h(b,y)p(b|y),\\
&f(a,b,x,y) =
\begin{cases}
1 &  ((x>1) \vee (y>1)) \wedge (a+b \equiv xy)\\
0 &  (x=1) \wedge(y=1) \wedge (a+b \equiv xy) \\
-P^4 & \text{otherwise}
\end{cases}
\\
&g(a,x) =
\begin{cases}
-1 & x > 1\\
0 & x = 1
\end{cases},
\quad h(b,y) =
\begin{cases}
-1 & y > 1\\
0 & y = 1
\end{cases},
\end{aligned}
\end{equation}
where $1\leq x\leq M_A, 1\leq y \leq M_B, 0\leq a,b<P$ and the modulo is over $P$. An illustration of this Bell inequality is shown in Fig.~\ref{fig:2Bell}(a).
\begin{figure}
\centering
    \subfigure[]{%
      \includegraphics[width=4.2cm]{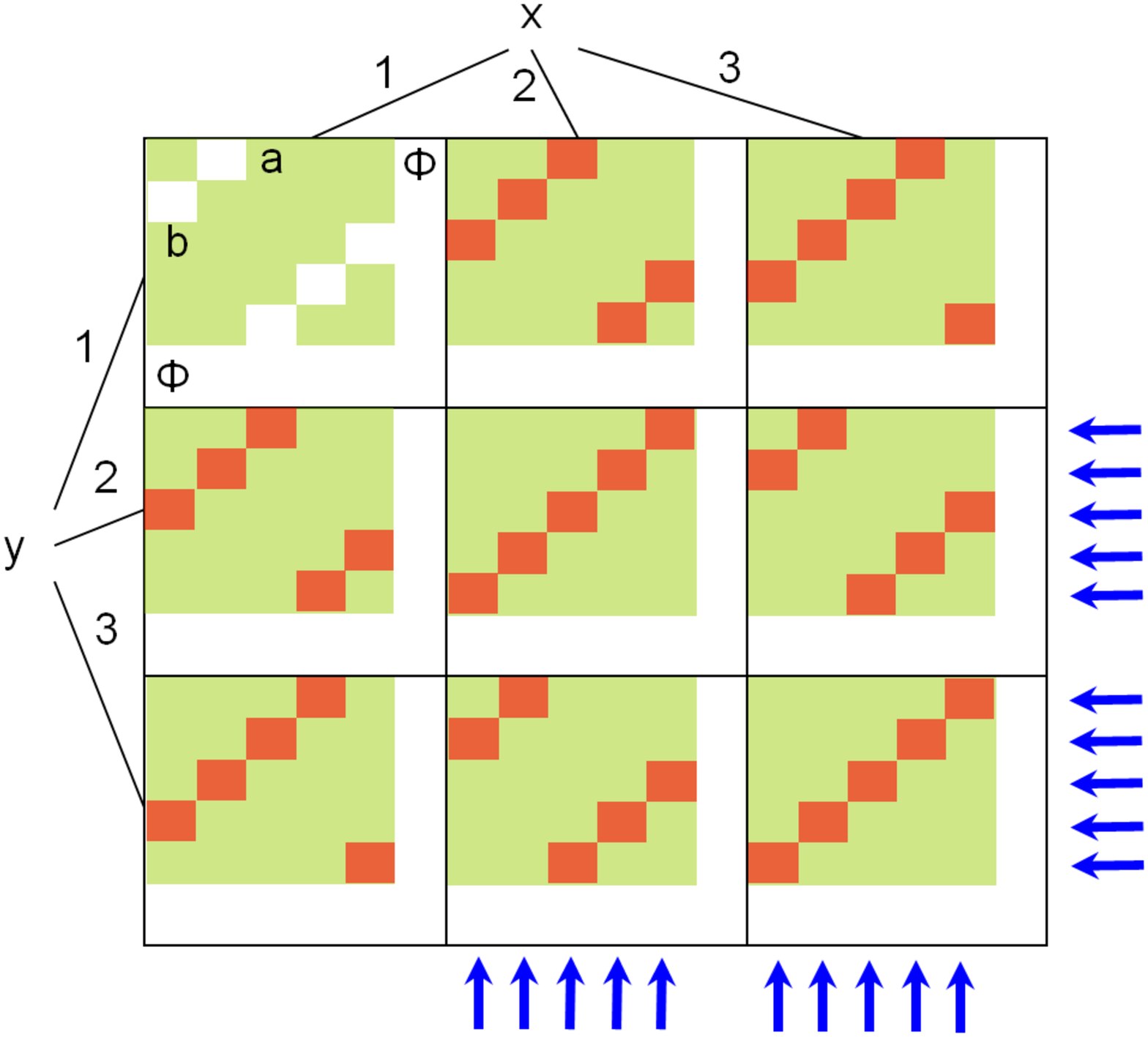}
    }
    \subfigure[]{%
      \includegraphics[width=4cm]{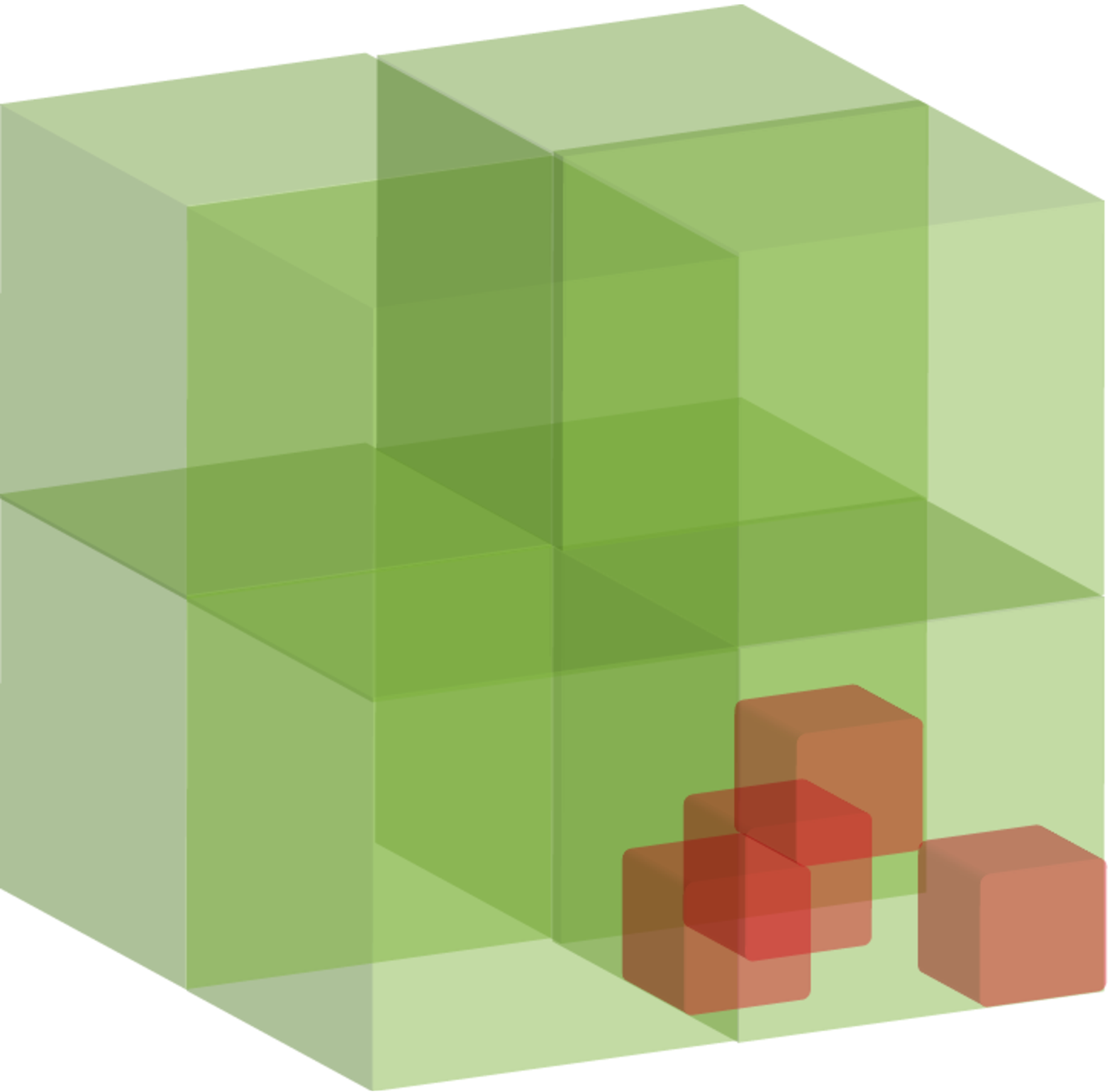}
    }
\caption{Geometric interpretation of the Bell inequalities $I_{M_AM_B(P+1)(P+1)}$ and $I^{\text{multi}}_{P+1}$. (a) In $I_{M_AM_B(P+1)(P+1)}$, which is a linear combination of $p(a,b|x,y)$, whose coefficients are first divided into blocks according to $x,y$, and then divided into entries according to $a,b$. The red, green, white entries are 1, $-P^4$ and 0 respectively. Each blue arrow pointing to a column or a row is a term $-p(a|x)$ or $-p(b|y)$ in $I_{M_AM_B(P+1)(P+1)}$.
(b) $I^{\text{multi}}_{P+1}$ is similarly divided into higher-dimensional blocks and entries. The red entries still maintain the property that there is one and only one red entry on every line within a block according to the construction.}
\label{fig:2Bell}
\end{figure}

First, we prove $I_{M_AM_B(P+1)(P+1)} \leq 0$ with LHV strategies. Since LHV strategies can be regarded as the linear combination of deterministic strategies, the LHV strategy can be assumed deterministic,
\begin{equation}
F(x) = a, G(y) = b.
\end{equation}
In other words, the input will determine output by the functions $F$ and $G$.

Case 1: If $F(x)  = \Phi \ \forall x$, then $I_{M_AM_B(P+1)(P+1)} \le 0$. Similarly, if $G(y) = \Phi \ \forall y$, then $I_{M_AM_B(P+1)(P+1)} \le 0$.

Case 2:
If $x_1\neq x_2, y_1\neq y_2$, $a_1 = F(x_1), a_2 = F(x_2), b_1 = G(y_1), b_2 = G(y_2)$ and none of $a_1,a_2,b_1,b_2$ equals $\Phi$, then one of $f(a_1,b_1,x_1,y_1)$, $f(a_1,b_2,x_1,y_2)$,
 $f(a_2,b_1,x_2,y_1)$, $f(a_2,b_2,x_2,y_2)$ must equal $-P^{4}$. Otherwise
\begin{eqnarray}
a_1 + b_1 &\equiv& x_1y_1 \mod{P}, \label{12.1}\\
a_1 + b_2 &\equiv& x_1y_2 \mod{P},  \label{12.2}\\
a_2 + b_1 &\equiv& x_2y_1 \mod{P}, \label{12.3}\\
a_2 + b_2 &\equiv& x_2y_2 \mod{P}.\label{12.4}
\end{eqnarray}
By applying \eqref{12.1}+\eqref{12.4}$-$\eqref{12.2}$-$\eqref{12.3}, we get $(x_1-x_2)(y_1-y_2) \equiv 0 \mod{P}$. Since $P$ is a prime number, and $1\leq x_1 \neq x_2 \le P,1\leq y_1\neq y_2 \le P$, there is a contradiction. Since $P^{4}$ is bigger than the sum of positive coefficients in the Bell inequality, $I_{M_AM_B(P+1)(P+1)}\leq 0$ follows.

Case 3:
If there is only one $x$ such that $F(x) = a \neq \Phi$, the Bell inequality is simplified to
\begin{equation}
\begin{aligned}
&\sum_{y} f(a, G(y), x, y) + \sum_{y} h(G(y),y) + g(a, x)\\
=&\sum_{y\neq 1} (f(a,G(y),x,y) - 1) + f(a,G(1),x,1) +
g(a, x)\\
\leq& f(a,G(1),x,1) + g(a,x)\leq 0.
\end{aligned}
\end{equation}
If there is only one $y$ such that $G(y) = b \neq \Phi$, similarly $I_{M_AM_B(P+1)(P+1)} \leq 0$.

Next, we construct a no-signaling strategy
\begin{equation}
p^{NS}(a,b|x,y) =
\begin{cases}
\frac{1}{P} & a+b\equiv xy\\
0 & \text{otherwise}
\end{cases}.
\end{equation}
This probability distribution satisfies the no-signaling condition because when $x,y,a$ are determined, there is a unique $b$ such that $p^{NS}(a,b|x,y) = \frac{1}{P}$ and therefore
\begin{equation}
\begin{aligned}
\forall y,\quad p^{NS}(a|x,y) = \sum_{b}p^{NS}(a,b|x,y) = \frac{1}{P};
\end{aligned}
\end{equation}
similarly $\forall x,\  p^{NS}(b|x,y)=1/P$.
Its Bell value is
\begin{equation}
\begin{aligned}
&I_{M_AM_B(P+1)(P+1)} \\
=& \sum_{a,b,x,y}f(a,b,x,y)p^{NS}_{\eta}(a,b|x,y) \\
&\quad+ \sum_{a,x}g(a,x)p^{NS}_{\eta}(a|x) + \sum_{b,y}h(b,y)p^{NS}_{\eta}(b|y)\\
=&\sum_{x>1\vee y>1} \eta^{2} - \sum_{x>1} \eta - \sum_{y>1} \eta\\
=& \eta^{2}(M_AM_B-1) - \eta(M_A+M_B-2).
\end{aligned}
\end{equation}
Recall that  $\eta > \eta^*_{NS}$ is necessary to violate the Bell inequality, i.e., $I_{M_AM_B(P+1)(P+1)}>0$.
Hence Eq.~\eqref{eq:twoupper} holds.
\end{proof}

 Through proving Theorem 2, we design a useful Bell inequality $I_{M_AM_B(P+1)(P+1)}$. It has a similar condition with the CHSH inequality: $a\oplus b \equiv x\cdot y$ \cite{clauser1969proposed}, but generalizes module $2$ to module $P$. Compared to another generalization of the CHSH inequality \cite{collins2004relevant}, our Bell inequality is more advantageous at persisting nonlocality when detector inefficiency occurs. There are relatively few quantum efficiency upper bound results. The quantum efficiency upper bound is $61.8\%$ when $M_A=M_B=4$, showing the $2/3$ bound can be beaten with relatively few inputs \cite{vertesi2010closing}.
  With more inputs $M_A=M_B=2^{d}$, the quantum efficiency bound can be as low as $\eta = d^{1/2}2^{-0.0035d}$  \cite{massar2002nonlocality}, which is however hard to be met in practice, requiring $10^{285}$ inputs when the efficiency is $1/10$. Our Bell inequality suggests much fewer inputs might suffice.

We next generalize the efficiency bound $\eta_{NS}^{*}$ to $N$ parties and have the following lower bound.
\begin{theorem}
In Bell tests with $N\le 500$ space-separated parties, the efficiency bound satisfies
\begin{equation}
\eta_{NS}^{*}\ge  \frac{N}{M(N-1)+1},
\end{equation}
 where all $N$ parties have $M$ inputs.
\end{theorem}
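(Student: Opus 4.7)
The plan is to construct an explicit local hidden variable (LHV) model that reproduces the full inefficient distribution $p^{NS}_\eta$ (not just $p^{NS}$) for any no-signaling $p^{NS}$ whenever $\eta \le N/(M(N-1)+1)$. This generalizes the Massar--Pironio bipartite construction invoked in the proof of Theorem 1 to $N$ parties, and, as in that proof, the construction must only leverage the no-signaling condition so that it automatically applies to the maximally nonlocal theory.

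First I would introduce a shared hidden variable $\lambda$ that jointly encodes a candidate output tuple and a ``response pattern'' specifying which parties output a real value versus $\Phi$. Each party $i$ then produces its output as a local function of $\lambda$ and its own input $x_i$. The model must satisfy three matching conditions: each party's marginal response probability equals $\eta$ for every input; when all $N$ parties respond (an event of probability $\eta^N$), their joint outputs reproduce $p^{NS}(\vec{a}|\vec{x})$; and when a proper subset $S$ of parties responds (with budgeted probability $\eta^{|S|}(1-\eta)^{N-|S|}$), its joint outputs must match the corresponding $|S|$-party marginal of $p^{NS}$.

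The technical heart of the argument is the subset-response condition. The no-signaling condition guarantees that each marginal is well-defined and independent of the inputs given to unresponsive parties, but such marginals can themselves be nonlocal, so matching them requires careful coordination through $\lambda$. The threshold $\eta=N/(M(N-1)+1)$ emerges by balancing the marginal-matching equations against the total probability budget, in direct analogy with how the bound $(M_A+M_B-2)/(M_AM_B-1)$ arises in the bipartite case of Theorem 1.

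The main obstacle---and presumably the source of the restriction $N\le 500$---is verifying that the probabilities over response patterns obtained from the marginal-matching system are nonnegative. Positivity reduces to a family of linear inequalities parametrized by $N$ and $M$, and I expect these to degrade as $N$ grows, so that the construction must be either analyzed combinatorially or checked numerically for $N\le 500$. Once positivity is established, the remainder is an essentially routine accounting of marginals using the no-signaling condition, together with the observation that the entire construction is quantum-free and hence valid in the maximally nonlocal theory.
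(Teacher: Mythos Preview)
Your proposal is correct and follows essentially the same route as the paper. The paper's proof simply cites Conjecture~2 of Massar--Pironio (Ref.~\cite{PhysRevA.68.062109}), which is precisely the multipartite LHV construction you sketch (verified there numerically for $N\le 500$), and then observes that only the no-signaling condition is used; you have accurately reconstructed the content of that cited result, including the origin of the $N\le 500$ restriction in the positivity check.
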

\begin{proof}
From Ref.~\cite{PhysRevA.68.062109} Conjecture 2, a multipartite LHV model is designed to simulate any quantum strategy when the efficiency is no larger than $M/[M(N-1)+1]$ and $N\leq 500$. Since the only condition used in that proof is the no-signaling condition, this finishes the proof. Ref.~\cite{PhysRevA.68.062109} also conjectures that this bound holds for any value of $N$.
\end{proof}

When $N\rightarrow \infty$, this theorem suggests $1/M$ is the detector efficiency lower bound  for showing nonlocality. We also have the following asymptotically matched upper bound.

\begin{theorem}
Consider $N$-party Bell tests with $M_1, M_2, ..., M_N$ inputs respectively,
\begin{equation}
\label{multiupper}
\eta_{NS}^{*}\le \left(\frac{M_1+M_2+\cdots+M_N-N}{M_1M_2\cdots M_N-1}\right)^{\frac{1}{N-1}}.
\end{equation}
\end{theorem}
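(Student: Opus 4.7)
The plan is to generalize the bipartite construction of Theorem~2 by replacing the CHSH-type condition $a+b\equiv xy$ with the $N$-party congruence $\sum_{i=1}^{N} a_i \equiv \prod_{i=1}^{N} x_i \pmod{P}$, where $P$ is the smallest prime at least $\max_i M_i$. I would define the Bell functional
\begin{equation*}
I^{\text{multi}}_{P+1} = \sum_{\vec{a},\vec{x}} f(\vec{a},\vec{x})\,p(\vec{a}|\vec{x}) + \sum_{i=1}^{N} \sum_{a_i,x_i} g_i(a_i,x_i)\,p(a_i|x_i),
\end{equation*}
with $f(\vec{a},\vec{x})=1$ when some $x_i>1$ and the congruence holds, $f=0$ when all $x_i=1$ and the congruence holds, $f=-P^{2N}$ otherwise, and $g_i(a_i,x_i)=-1$ for $x_i>1$, zero otherwise. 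The candidate no-signaling strategy is $p^{NS}(\vec{a}|\vec{x})=1/P^{N-1}$ whenever $\sum_i a_i\equiv \prod_i x_i \pmod{P}$ and $0$ otherwise; by counting valid $\vec{a}$'s one sees that the marginals are uniform, $p^{NS}(a_i|x_i)=1/P$, so no-signaling holds.

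To bound $I^{\text{multi}}_{P+1}$ under any deterministic LHV strategy $\{F_i:x_i\mapsto a_i\}$, I would split into three cases in the spirit of Theorem~2. If some party always outputs $\Phi$, only the nonpositive $g_i$ terms survive and $I^{\text{multi}}_{P+1}\le 0$ is immediate. The essential case is when every party has at least two inputs $x_i\neq x_i'$ with non-$\Phi$ images; the goal is to show that among the $2^N$ input tuples built from those choices, at least one carries $f=-P^{2N}$. Assuming the contrary, each tuple indexed by $S\subseteq[N]$ would satisfy
\begin{equation*}
\sum_{i\in S}F_i(x_i)+\sum_{i\notin S}F_i(x_i')\equiv \prod_{i\in S}x_i\prod_{i\notin S}x_i' \pmod{P}.
\end{equation*}
Taking the signed sum $\sum_S(-1)^{|S|}$ annihilates the left-hand side (for $N\ge 2$, each $F_i$-value is paired with $\sum_{S\ni i}(-1)^{|S|}=0$ or $\sum_{S\not\ni i}(-1)^{|S|}=0$), while the right telescopes to $\prod_i(x_i'-x_i)$, nonzero modulo $P$ since each difference lies in $[-(P-1),P-1]\setminus\{0\}$ and $P$ is prime. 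Because $P^{2N}$ dominates the total positive mass, which is at most $M_1\cdots M_N\cdot P^{N-1}\le P^{2N-1}$, the existence of one bad tuple forces $I^{\text{multi}}_{P+1}\le 0$. The remaining case, where some party has exactly one non-$\Phi$ input, I would dispatch by the same telescoping argument as in Case~3 of Theorem~2: fixing that party's input and output reduces the functional to an $(N-1)$-party instance with its $g_i=-1$ penalty absorbing the extra $f=1$ contributions, allowing induction on $N$.

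Finally, I would evaluate $I^{\text{multi}}_{P+1}$ on $p^{NS}$ with detector efficiency $\eta$. The positive block contributes $\eta^N(M_1\cdots M_N-1)$, one unit per non-all-ones input tuple, and the single-party penalties sum to $-\eta\sum_{i=1}^{N}(M_i-1)$, giving
\begin{equation*}
I^{\text{multi}}_{P+1}=\eta^N\Bigl(\prod_{i=1}^{N} M_i-1\Bigr)-\eta\Bigl(\sum_{i=1}^{N} M_i-N\Bigr).
\end{equation*}
Imposing $I^{\text{multi}}_{P+1}>0$ and solving for $\eta$ yields exactly Eq.~\eqref{multiupper}. The main obstacle I anticipate is the combinatorial identity in the essential LHV case: while the alternating-sum trick generalizes cleanly, I must verify that $-P^{2N}$ is large enough to absorb every admissible positive contribution and fold the single-input-per-party reduction into a clean induction on $N$, which is where the bookkeeping becomes delicate.
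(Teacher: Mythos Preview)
Your construction is the paper's: the same Bell functional $I^{\text{multi}}_{P+1}$, the same no-signaling box, the same evaluation under efficiency $\eta$. Two concrete issues separate the sketch from a complete proof.

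First, taking $P\ge\max_i M_i$ is not enough; the paper uses the least prime \emph{strictly} greater than $\max_i M_i$. With equality allowed, a LHV strategy can violate $I^{\text{multi}}_{P+1}\le0$. Take $N=3$, $M_1=M_2=M_3=2$, so your $P=2$; set $F_1(1)=\Phi$, $F_1(2)=0$, and $F_2\equiv F_3\equiv0$. Every live input tuple has $x_1=2\equiv0$, so $\prod_i x_i\equiv0$ and the congruence $\sum_i a_i\equiv0$ holds for all four tuples $(2,x_2,x_3)$; the $f$-block contributes $+4$ while the three $g$-penalties give $-3$, so $I=1>0$. Your alternating-sum identity in Case~2 is unaffected because only differences $x_i'-x_i$ appear there, but once you peel off a party in Case~3 its fixed input $x_j'$ survives as a multiplicative factor in the reduced congruence, and $x_j'\equiv0$ kills the contradiction downstream. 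Requiring $P>\max_i M_i$ restores $x_j'\not\equiv0$.

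Second, even with the corrected $P$, your Case~3 does not reduce to an $(N{-}1)$-party instance of the \emph{same} functional: fixing party $j$ at $(x_j',a_j')$ changes the congruence to $\sum_{k\ne j}a_k\equiv x_j'\prod_{k\ne j}x_k-a_j'$ and, when $x_j'>1$, removes the $f=0$ cell entirely. A clean induction would need a broader hypothesis covering all congruences $\sum_k a_k\equiv c\prod_k x_k+d$ with $c\not\equiv0$ and an arbitrary distinguished input tuple playing the role of $\vec 1$. The paper avoids this by a different case split: its Case~2 only asks that \emph{two} parties have two non-$\Phi$ inputs, and the four-equation contradiction $(x_i'-x_i'')(x_j'-x_j'')\prod_{k\ne i,j}x_k\equiv0$ suffices (this is precisely where $P>\max$ is used, to keep $\prod_{k\ne i,j}x_k\not\equiv0$); the complementary Case~3 then has at least $N-1$ parties pinned to a single non-$\Phi$ input and is dispatched by a short direct calculation, with no induction needed. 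Your $2^N$-term alternating sum is a correct and pleasant variant for the ``all parties vary'' subcase, but the paper's coarser split makes the bookkeeping in the remaining case one line rather than a recursion.
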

\begin{proof}
Denote $P$ as the minimal prime number such that $P> \max(M_1,M_2,\cdots,M_N)$.
We construct a general Bell inequality with $P$ outputs excluding $\Phi$,
\begin{equation}
\begin{aligned}
I_{P+1}^{\text{multi}}
&= \sum_{\vec{a},\vec{x}} f(\vec{a},\vec{x})p(\vec{a}|\vec{x})
+ \sum_i\sum_{a_i,x_i} g_i(a_i,x_i)p(a_i|x_i) \\
f(\vec{a},\vec{x}) &=
\begin{cases}
1 & (\exists t, x_t \neq 1)\wedge  (\sum_{j}{a_j} \equiv \prod_{i} x_i)\\
0  &	(\forall t, x_t = 1)\wedge (\sum_{i}{a_i} \equiv \prod_{i} x_i)\\
-P^{2N} & \text{otherwise}\\
\end{cases}
\\
g_i(a_i,x_i) &=
\begin{cases}
-1 & x_i > 1\\
0 & x_i = 1
\end{cases},
\end{aligned}
\end{equation}
where $\vec{a} =\{a_1, ..., a_N\}, \vec{x} = \{x_1, ..., x_N\}$, $1\leq x_i\leq M_i, 0\leq a_i<P$ and the modulo is over $P$. An illustration of this Bell inequality is shown in Fig.~\ref{fig:2Bell}(b).

First, we prove $I_{(P+1)}^{\text{multi}} \leq 0$ for any LHV strategy. Since a LHV strategy can always be regarded as a linear combination of deterministic strategies, we consider a deterministic strategy $
F_i(x_i) = a_i\ \forall i$.

Case 1:
If there exists some $i$ such that $F_i(x_i) = \Phi \ \forall 1\leq x_i \le M_i$, then $I_{P+1}^{\text{multi}} \le 0$.

Case 2:
There exist $i,j,x_i',x_i'', x_j', x_j''$, where $x_i',x_i''$ are different inputs of the $i$-th party, $x_j', x_j''$ are different inputs of the $j$-th party,  such that none of the corresponding outputs $a_i'=F_i(x_i'), a_i''=F_i(x_i''), a_j'=F_j(x_j'), a_j''=F_j(x_j'')$ equals $\Phi$.
Also for any $k$, there exists $x_k$ such that $F_k(x_k) = a_k \neq \Phi$ because otherwise Case 1 arises. Denote $\vec{a_1} = \{a_k\}$, $\vec{x_1} = \{x_k\}$ and $\vec{a_1}_{a_i', a_j'}$ as the vector which replaces the $i$-th and the $j$-th items in $\vec{a_1}$ with $a_i', a_j'$ respectively and similarly for $\vec{x_1}_{x_i',x_j'}$. Consider the following four Bell coefficients:
\begin{align*}
f(\vec{a_1}_{a_i',a_j'},\vec{x_1}_{x_i',x_j'}),f(\vec{a_1}_{a_i'',a_j'},\vec{x_1}_{x_i'',x_j'}),\\
f(\vec{a_1}_{a_i',a_j''},\vec{x_1}_{x_i',x_j''}),f(\vec{a_1}_{a_i'',a_j''},\vec{x_1}_{x_i'',x_j''}).
\end{align*}
Denote $c = \sum_{k\neq i, k\neq j} a_k, z = \prod_{k\neq i, k\neq j} x_k$ and consider
 \begin{eqnarray}
a_i' + a_j' +c&\equiv x_i'x_j' z \mod{P}  \label{(1)},\\
a_i' + a_j''  +c&\equiv x_i'x_j'' z \mod{P}\label{(2)},\\
a_i'' + a_j'  +c&\equiv x_i''x_j' z \mod{P}\label{(3)},\\
a_i'' + a_j''  +c&\equiv x_i''x_j'' z \mod{P}\label{(4)}.
\end{eqnarray}
If one of the above equations does not hold, then $-P^{2N}$ will contribute to the Bell inequality which induces $I^{\text{multi}}_{P+1} \leq 0$, because the sum of the positive coefficients is smaller than $P^{2N}$.
If all equations hold, $\eqref{(1)}+\eqref{(4)}-\eqref{(2)}-\eqref{(3)}$ yields $(x_i'-x_i'')(x_j'-x_j'')z \equiv 0$, which contradicts with that $P$ is a prime number and $1\leq x_i',x_i'',x_j',x_j'',x_k< P$.

Case 3:
Without loss of generality, we assume for $1\leq i\leq N-1$, there is only one $x_i'$ such that $F_i(x_i') = a_i' \neq \Phi$. Denote $\vec{a'} = \{a_i'\}_{1 \leq i \leq N-1}$, $\vec{x'} = \{x_i'\}_{1 \leq i \leq N-1}$. Then $I^{\text{multi}}_{P+1}=\sum_{x_n} f(\vec{a'},F_n(x_n),\vec{x'},x_n)+\sum_{i=1}^{N-1} g_i(a_i',x_i') + \sum_{x_n} g_n(F_n(x_n), x_n) =\sum_{x_n\neq 1} (f(\vec{a'},F_n(x_n),\vec{x'},x_n) - 1)
+ \sum_{i=1}^{N-1} g_i(a_i',x_i') + f(\vec{a'}, F_n(1), \vec{x'}, 1)
\leq \sum_{i=1}^{N-1} g_i(a_i',x_i') + f(\vec{a'}, F_n(1), \vec{x'}, 1)
\leq 0$. The last inequality holds because if $f(\vec{a'}, F_n(1), \vec{x'}, 1)=1$, then there exists $1\le j<n$ such that $x_j'>1$ and consequently $g_j(a_j',x_j')=-1$.

Next, we construct a no-signaling strategy
\begin{equation}
\begin{aligned}
p^{NS}(\vec{a}|\vec{x})= \begin{cases}
\frac{1}{P^{N-1}} & \sum_{i=1}^{N}a_i\equiv \prod_{i=1}^{N}x_i\\
0 & \text{otherwise}
\end{cases},
\end{aligned}
\end{equation}
which satisfies the no-signaling condition
\begin{equation}
\begin{aligned}
\forall \vec{a},\vec{x},1\le k\le N \quad  p^{NS}(a_k|\vec{x}) = p^{NS}(a_k|x_k),
\end{aligned}
\end{equation}
because for any value of $a_k$,
\begin{equation}
\begin{aligned}
p^{NS}(a_k|\vec{x}) =\sum_{\vec{a'}:a'_k = a_k, \sum_{i}a'_i \equiv \prod_{i}x_i} \frac{1}{P^{N-1}} =\frac{1}{P}.
\end{aligned}
\end{equation}
The last equality holds because there are only $N-2$ free variables.

The Bell value of the no-signaling strategy $p^{NS}_{\eta}$ is
\begin{equation}
\begin{aligned}
I_{P+1}^{\text{multi}}=& \sum_{\vec{a},\vec{x}} f(\vec{a},\vec{x})p(\vec{a}|\vec{x}) +\sum_i \sum_{a_i,x_i}g_i(a_i, x_i)\\
=&\sum_{\vec{x}:\exists t, x_t \neq 1} \eta^{N} - \sum_i \sum_{x_i>1} \eta \\
=&\eta^N(\prod_{i=1}^nM_i-1)- \eta(\sum_{i=1}^nM_i-N).
\end{aligned}
\end{equation}
Combined with the definition of $\eta^*_{NS}$ that $\eta > \eta^*_{NS}$ is necessary to violate the Bell inequality, i.e., $I_{P+1}^{\text{multi}}>0$, this leads to Eq.~\eqref{multiupper}.
\end{proof}

We compare the efficiency lower bound and upper bound in Theorem 3 and Theorem 4 for different numbers of parties $N$ in Fig.~\ref{fig:upperlower}. In the comparison, the number of inputs for each party is taken to be the same value $M$. It can be seen that when $N=2$, the two bounds coincides. Also when $N$ becomes large, the two bounds converge to the same value. This is consistent with the analytic upper and lower bound formulas, which both converge to $1/M$ when $N$ goes to infinity.

\begin{figure}
\centering
\includegraphics[width=9cm]{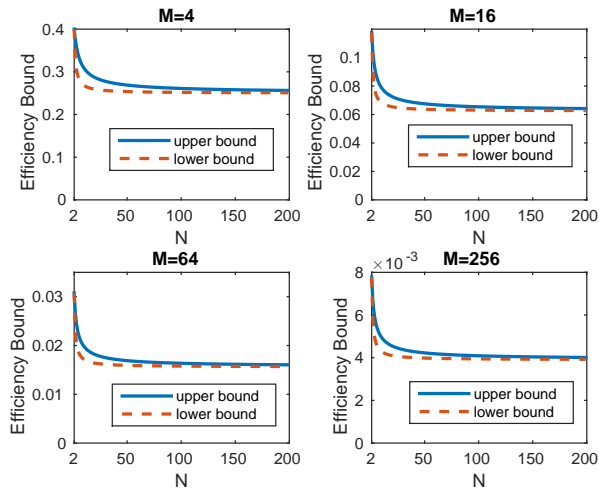}
\caption{(a)-(d) varies $M$ from 4 to 256 and for each $M$, the efficiency upper bound and lower bound are shown for $N=2,3,\cdots,200$. When $M$ increases, the ratio between the efficiency bound of $N=2$ and $N=200$ gradually increases but is always smaller than 2.}
\label{fig:upperlower}
\end{figure}

It is also instructive to fix $M$ and compare the efficiency bound for different $N$. It can be seen that with increasing $N$, the efficiency bound is lowered. However, even when $N$ goes to infinity, the efficiency bound is lowered by at most half compared to $N=2$. On the other hand, increasing the number of input settings $M$ greatly reduces the efficiency requirement. For example, when $M=256$ and $N=2$, an  efficiency  of $8\%$ is enough to demonstrate nonlocality. Thus in this respect, increasing the number of input settings is more efficient than increasing the number of parties.

Compared to quantum efficiency bounds, there exist significant gaps. When $N=3$ and $M=8$, the best bound of the quantum theory is $0.501$ \cite{pal2015closing} while the best bound of the maximally nonlocal theory is $0.20$. For $N\to \infty$, the best bound of the quantum theory $2/(2+M)$ \cite{PhysRevA.92.052104} is approximately twice of the best bound of the maximally nonlocal theory $1/M$. Therefore the advantage brought by our no-signaling strategies may provide inspiration to design more robust quantum strategies.

\section{Discussion}
In summary, we have investigated the efficiency requirement for the violation of Bell inequalities in the maximally nonlocal theory. Our result implies that, for showing the maximally nonlocal theory, the detector efficiency requirement can be arbitrarily low with enough input settings. Our work opens a few interesting avenues for future investigations. First, though our work essentially closes the gap between the upper bound and the lower bound of the detector efficiency for the maximally nonlocal theory, the corresponding question in the quantum theory is still wide open. It would be interesting to apply our techniques to solve the analog question in the quantum theory.
Second, there is a small gap between the detector efficiency bounds of the maximally nonlocal theory when $2<N<\infty$. Closing this gap may require some more delicate estimates on both the upper bound and  the lower bound.

\section*{Acknowledgements}The authors acknowledge insightful discussions with X. Ma. This work was supported by the 1000 Youth Fellowship program in China.

Z. C. and T. P. contributed equally to this work.
\appendix
\section{Monotonicity between the efficiency and the simulation capability of the LHV model}
\label{app}
The following lemma shows monotonicity  between the efficiency $\eta$ and the simulation capability of LHV models:

\begin{lemma}\label{monotonicity}
For $\eta_1 < \eta_2$, if $p^{NS}_{\eta_2}$ can be simulated by a LHV model, then $p^{NS}_{\eta_1}$ can also be simulated by a LHV model.
\end{lemma}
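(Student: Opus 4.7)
The plan is to simulate the lower-efficiency distribution by post-processing the LHV model for the higher-efficiency one. Specifically, set $r = \eta_1/\eta_2 \in (0,1]$. Starting from the given decomposition
\begin{equation*}
p^{NS}_{\eta_2}(a,b|x,y) = \int_{\lambda} p(\lambda)\, p_A(a|x,\lambda)\, p_B(b|y,\lambda)\, d\lambda,
\end{equation*}
I would define a new pair of local strategies $p'_A, p'_B$ that use the same shared randomness $\lambda$ together with fresh, private randomness on each side: each party keeps its original output with probability $r$ if that output is non-$\Phi$, and replaces it by $\Phi$ with probability $1-r$; a $\Phi$ output from the original strategy always stays $\Phi$. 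Concretely, $p'_A(a|x,\lambda) = r\, p_A(a|x,\lambda)$ for $a \neq \Phi$ and $p'_A(\Phi|x,\lambda) = 1 - r + r\, p_A(\Phi|x,\lambda)$, and similarly on B's side.

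The first step is to observe that this post-processed model is still a LHV model, because the added randomness on each side is independent and applied locally to the output. The second step is to verify that the integral of $p'_A\, p'_B$ against $p(\lambda)$ reproduces $p^{NS}_{\eta_1}(a,b|x,y)$ in each of the four output patterns. For both outcomes non-$\Phi$, the product of local strategies picks up a factor $r^2$, which turns $\eta_2^2 p^{NS}$ into $\eta_1^2 p^{NS}$. For one outcome equal to $\Phi$ and the other not, expanding $p'_A(\Phi|x,\lambda)$ as $(1-r) + r\, p_A(\Phi|x,\lambda)$ leads to two terms: one involving the marginal $p^{NS}_{\eta_2}(b|y) = \eta_2 p^{NS}(b|y)$ and one involving the joint $p^{NS}_{\eta_2}(\Phi,b|y)$, whose sum simplifies (using $1 - r\eta_2 = 1-\eta_1$) to $\eta_1(1-\eta_1)p^{NS}(b|y) = p^{NS}_{\eta_1}(\Phi,b|y)$. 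The case $(\Phi,\Phi)$ is analogous and collapses to $[1-r\eta_2]^2 = (1-\eta_1)^2$.

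The only genuine subtlety is the algebra in the mixed cases: one must recognize that the cross terms produced by expanding both $p'_A(\Phi|x,\lambda)$ and $p'_B(\Phi|y,\lambda)$ conveniently reassemble, using the original LHV decomposition of the marginals and the joint $\Phi\Phi$ probability, into a perfect square in $1-r\eta_2$. Once this identity is noted, every case reduces to a one-line check, and the $\eta_1\le\eta_2$ hypothesis enters only to ensure $r\le 1$ so that the post-processing probabilities are well defined. Thus the hard part of the proof is essentially bookkeeping rather than a conceptual obstacle.
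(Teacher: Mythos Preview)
Your proposal is correct and follows essentially the same construction as the paper: define the new local strategies by multiplying the non-$\Phi$ outputs by $r=\eta_1/\eta_2$ and absorbing the deficit into the $\Phi$ output. The paper simply states that the verification is straightforward, whereas you actually sketch the four-case check; your added algebra is sound.
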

\begin{proof}
Assume the LHV model $p_2$ which can simulate $p^{NS}_{\eta_2}$ is
\begin{equation}
\begin{aligned}
p^{NS}_{\eta_2}(a,b|x,y) &= \int_\lambda p_2(\lambda)p_2(a|x,\lambda)p_2(b|y,\lambda)d\lambda\\
p^{NS}_{\eta_2}(a,\Phi|x,y) &= \int_\lambda p_2(\lambda)p_2(a|x,\lambda)p_2(\Phi|y,\lambda)d\lambda\\
p^{NS}_{\eta_2}(\Phi,b|x,y) &= \int_\lambda p_2(\lambda)p_2(\Phi|x,\lambda)p_2(b|y,\lambda)d\lambda\\
p^{NS}_{\eta_2}(\Phi,\Phi|x,y) &= \int_\lambda p_2(\lambda)p_2(\Phi|x,\lambda)p_2(\Phi|y,\lambda)d\lambda\\
\end{aligned}
\end{equation}
Then, we define a LHV model $p_1$ such that
\begin{equation}
\begin{aligned}
p_1(a|x,\lambda) &= p_2(a|x,\lambda) \cdot \frac{\eta_1}{\eta_2}\\
p_1(\Phi|x,\lambda) &= 1-\frac{\eta_1}{\eta_2} + p_2(\Phi|x, \lambda) \cdot \frac{\eta_1}{\eta_2}\\
p_1(b|y,\lambda) &= p_2(b|y,\lambda) \cdot \frac{\eta_1}{\eta_2}\\
p_1(\Phi|y,\lambda) &= 1-\frac{\eta_1}{\eta_2} + p_2(\Phi|y, \lambda) \cdot \frac{\eta_1}{\eta_2}\\
\end{aligned}
\end{equation}
It is easy to verify that $p_1$ simulates $p_{\eta_1}^{NS}$. Intuitively, we modify the strategy $p_2$ to $p_1$ by assuming the detector efficiency of $p_2$ to be $\eta_1/\eta_2$: outputting $\Phi$ (empty) with a probability $1-\eta_1/\eta_2$.
\end{proof}

By Lemma \ref{monotonicity}, we denote the function $e:p^{NS}\rightarrow [0,1]$, such that $e(p^{NS})$ is the maximal efficiency which satisfies that $p^{NS}_{e(p^{NS})}$ can be simulated by a LHV model.

Thus, the efficiency bound $\eta^*_{NS}$ of the no-signaling strategy can be defined as follows (with the number of inputs fixed):
\begin{equation}
\eta^{*}_{NS} = \inf_{p^{NS}} e(p^{NS}).
\end{equation}

\bibliographystyle{apsrev4-1}

\bibliography{BibBell}

\end{document}